\documentclass[journal]{article}                                  

\usepackage{arxiv}
\usepackage{hyperref}
\usepackage{graphicx}          
\usepackage{amsmath} 
\usepackage{amssymb}  

\usepackage{amsthm}
\usepackage{color,soul}
\usepackage{flushend}
\usepackage{cite}
\usepackage{mfirstuc}
\usepackage{mathtools}
\usepackage{xcolor}
\usepackage{array,multirow}
\usepackage{algorithm} 
\usepackage{algorithmic}  
\usepackage[linesnumbered,algo2e,ruled,vlined,norelsize]{algorithm2e} 

\allowdisplaybreaks
\newtheorem{thmm}{Theorem}
\newtheorem{propp}{Proposition}

\newtheorem{assm}{Assumption}

\usepackage{tikz}
\usepackage{textcomp}
\usepackage{hyperref}
\usepackage{lipsum}
\newtheorem{definition}{Definition}

\usepackage{authblk}
\title{\LARGE \bf
Improving Fast Charging Safety With Core Temperature Estimation Via Kolmogorov-Arnold Network
}

\author[1]{Faysal Ahamed}
\author[1]{Tanushree Roy}
\affil[1]{Department of  Mechanical and Aerospace Engineering, Texas Tech University, Lubbock, TX 79409, US. Emails:~{\tt\small fahamed@ttu.edu, tanushree.roy@ttu.edu}.}

\begin{document}


\maketitle
\pagestyle{empty}

\begin{abstract}
Fast charging of Lithium-ion batteries can lead to a significant temperature rise, which can cause serious risks to battery safety and lifetime. To ensure safe battery operation, thermal constraints must be enforced during the fast charging process. However, the core temperature of the battery cannot be directly measured in practice, which makes real-time safety enforcement challenging. This paper proposes a framework that incorporates core temperature estimates from Kolmogorov-Arnold Network within robust control barrier function (KAN-rCBF) constraints for battery fast-charging. The algorithm utilizes measurements from battery surface temperature, coolant temperature, coolant power, and charging current to solve a quadratic programming problem under safety constraints. We prescribe analytical safety guarantees for this optimal charging policy under KAN estimation errors and model uncertainty. Simulation results show that the proposed method maintains a safe battery temperature while achieving charging times comparable to the state-of-the-art method, where the latter fails to guarantee the same level of thermal safety.
\end{abstract}

\section{Introduction}

Lithium-ion batteries are central to our energy future, as they can be recharged and can store a significant amount of energy efficiently. However, in many applications, including electric vehicles, the primary challenge has shifted from energy storage capacity to charging performance. Specifically, the focus is on achieving rapid and safe charging while preventing thermal runaway and material degradation. To tackle this problem, researchers have introduced different strategies, which can be broadly divided into three groups: heuristic, data-driven, and model-based methods.

\subsection{Literature}

Recent studies have explored heuristic approaches like multi-stage constant current (MSCC), constant power (CP), and pulse charging (PC) to improve the speed of charging and safety of lithium-ion batteries. For instance, Taguchi-optimized MSCC protocols have been explored in multiple studies for battery fast charging \cite{vo2015new,jiang2020optimization,tahir2024multi}. Similarly, \cite{han2024optimal} proposed a Dandelion optimizer to design an optimal MSCC approach that also reduced charge time without overheating the batteries. An expansion force-based adaptive MSCC scheme was introduced in \cite{shen2025expansion}, which achieved nearly 50\% faster charging without capacity loss. In addition,  \cite{liu2024experimental} tested three pulse charging modes: charge-charge (C-C), charge-rest (C-R), and charge-discharge (C-D) on Lithium iron phosphate cathode (LFP) and Nickel–manganese–cobalt oxide cathode (NMC) cells to show that the C-D mode reduces charging time and lowers resistance, and extends cycle life relative to CC-CV. \cite{guo2024unravelling} showed that pulse current charging significantly extends the cycle life of graphite and NMC532 electrodes by promoting uniform Li-ion distribution and by improving stabilization of electrode interfaces. Additionally, \cite{tahir2023overview} showed that improvement in efficiency, battery life, and thermal regulation in MSCC are heavily dependent on the design of charging stages. Altogether, heuristic strategies improve speed and safety compared to CC-CV, but they are limited, as the fixed rules fail to adapt to changing conditions.

In recent years, data-driven methods have also been explored to improve fast charging capabilities beyond heuristic rules. For instance, deep reinforcement learning with electrothermal modeling has been used in \cite{zhang2025fast} to decrease the charging time compared to CC-CV and ensure safe voltage and temperature limits. Similarly, \cite{shokry2025health} designed a health-constrained predictive charging controller using deep learning networks for fast charging while maintaining the desired temperature. In addition, \cite{sayed2025optimizing} proposed an adaptive reinforcement learning-based fast charging framework that reduces charging time under voltage and thermal constraints, and showed that this method can balance fast charging with improved health and safety. Finally, \cite{yu2025optimized} used a deep reinforcement learning framework to optimize MSCC charging with safety constraints. 

Model-based control uses the physics-based dynamics of the batteries to ensure safer and more reliable fast charging. \cite{tian2020real} implemented a nonlinear double-capacitor battery model in an explicit model predictive control (MPC) framework for lithium-ion cells to ensure fast, health-aware charging with low computational cost. Similarly, \cite{bills2022model} proposed a quadratic battery model and validated its performance against high-fidelity models in an MPC framework for safe fast charging under thermal, voltage, and safety constraints. Alternatively, \cite{li2021electrochemical} combined a computationally-efficient ensemble transform Kalman filter with a proportional-integral controller to ensure safety and efficiency in fast charging. In addition, \cite{hirt2025safe} combined Bayesian optimization and model predictive control to achieve faster charging than the standard model predictive control with voltage and temperature limits. Moreover, analytical design criteria for battery charging controllers have been derived using Lyapunov and barrier function theory to keep the battery thermal systems stable and within safe operating limits using lumped thermal models \cite{vyas2022thermal,vyas2024input} and heat equations \cite{roy2024input}. \cite{feng2024safe} proposes a control barrier function (CBF) framework for safe battery charging and discharging under SoC, terminal voltage, and surface temperature constraints by splitting the CBF problem into cascaded electrical and thermal subsystems, which results in sequential Quadratic Program (QP) problems that improve computational efficiency while maintaining safety constraints.

\subsection{Research gap} 
Despite the recent advances, several challenges remain in guaranteeing the safety of battery fast-charging control. 
Heuristic fast-charging strategies determine charging profiles using predefined rules or offline optimization and rely on fixed charging patterns. Therefore, these methods have limited ability to adapt to internal battery state variations during real-time operation. 
On the other hand, data-driven approaches can learn accurate charging protocols more effectively from operational battery data and enable faster policy evaluation. However, purely data-driven methods often lack interpretability and analytical guarantees of performance under system uncertainty. 
In contrast, model-based fast-charging policies leverage physics-based battery models, which enhance the interpretability of the resulting control decisions and often incorporate safety constraints that can be enforced with analytical guarantees. Now, to improve the efficacy of the model-based control, battery state estimation (such as core temperature) is essential for unmeasured states. But model-based estimations are considerably slower than machine-learning-based predictions, which can be obtained via a forward pass of the network \cite{ma2020core}. This model-based estimation bottleneck is especially crucial when the estimates are used within an optimization framework, such as MPC or CBF, which are themselves computationally expensive. 

Therefore, a key research gap remains in developing an interpretable fast-charging framework that leverages the computational efficiency and accuracy of machine learning based estimation and the robust analytical performance guarantees of model-based control theory.

\subsection{Need for Kolmogorov-Arnold network (KAN)}
Accurate estimation of battery temperature requires models that can capture the nonlinear relationship between operating conditions and thermal behavior. Machine learning methods are often used for this purpose because they can learn these relationships directly from data.  Among these methods, traditional neural networks such as multi-layer perceptrons (MLP, LSTM, RNN) are widely used \cite{goodfellow2016deep}. However, these models are often extremely large and are used as black-box systems, as it is difficult to understand how each input variable affects the output through the large model structure. Model-based thermal estimation methods rely on accurate battery models and parameter identification. However, variations in thermal parameters and operating conditions can affect estimation accuracy and make real-time implementation more difficult \cite{dong2021analysis}.

To address these limitations, this work employs the Kolmogorov-Arnold network (KAN).
Unlike conventional neural networks that learn linear combinations of weights with fixed nonlinear activation, KAN models learn nonlinear activation functions along the edges of the network to be summed up at the nodes \cite{liu2024kan}.
This structure provides two main advantages. First, KAN can represent complex nonlinear relationships with fewer parameters, which results in a computationally efficient and compact model structure. Second, because the learned functions admit symbolic representations, the compact model allows a tractable and interpretable functional mapping from the battery input variables to the estimated core temperature. \cite{mallick2025kan} showed that KAN-based thermal models provide accurate temperature estimation with smaller model structures and faster estimation time relative to the best-performing neural networks and faster estimation compared to model-based implementation. 
Moreover, KAN has been used for estimation of state-of-charge (SoC)  \cite{sulaiman2024battery}, state-of-health (SoH) \cite{cui2025enhanced,shao2025soh,liu2025soh,ghosh2026explainable}, capacity degradation\cite{yang2025capacity}, diagnostics of battery thermal anomalies \cite{ghosh2026kankoopman}, and remaining useful life (RUL) \cite{he2025remaining} of lithium-ion batteries.
This highlights KAN's potential use for safety-critical applications, such as battery fast charging, as well as its implementability on physical battery management systems with limited computational resources. In addition, the KAN approximation theorem provides a theoretical bound for KAN estimation error, which has been leveraged to derive the analytical safety guarantees for our proposed safety framework. Therefore, the combination of interpretability, computational efficiency, and provable approximation capability of KAN allows accurate estimation of the battery core temperature in this work.

Our primary contributions of this paper are as follows.
\begin{enumerate}
    \item We integrated a KAN-based core temperature estimator within the control barrier function framework to develop a fast charging strategy that maintains safe battery temperature. 
    \item We obtain a charging policy that optimally satisfies the barrier function constraints on battery temperatures, voltage, and state-of-charge (SoC) of the battery while remaining close to the high C-rate reference.
    \item We derive the analytical conditions that guarantee the performance of our fast charging policy under model uncertainty and KAN estimation error.
\end{enumerate}
The rest of the paper is organized as follows. Section~\ref{prel} presents the preliminaries, i.e., KAN, lumped parameter thermal model, and estimating core temperature using KAN. In Section~\ref{fast charging}, control barrier function for fast charging, model development, and KAN-rCBF charging optimization conditions are described. Next, we present our simulation results in Section~\ref{sim}. Finally, in Section~\ref{conclu} we conclude our work. 

\noindent
\textbf{Notation:} \label{TH}
The $C^m$-norm of a function $g(x)$ on the interval $I$ is given by $\|g\|_{C^m} = \max_{|\Psi|\leqslant m} \sup_{x \in I}  \bigl| D^\Psi g(x) \bigr|$, where $D^\Psi$ is the partial derivative operator for  multi-index $\Psi$. For a scalar quantity $a$, $|a|$ denotes its absolute value. Given a function $h(x)$ and a vector $v(x)$, the Lie derivative of $h$ along $v$ is defined as $ L_v h(x) := \nabla h(x)^\top v(x)$. A continuous function $\alpha:[0,a)\rightarrow[0,\infty)$ is called a class-$\mathcal{K}$ function when it is strictly increasing and satisfies $\alpha(0)=0$. A function $\alpha:(-b,a)\rightarrow \mathbb{R}$, where $a,b>0$, is called an extended class-$\mathcal{K}$ function if it is continuous, strictly increasing, and satisfies $\alpha(0)=0$.

\section{Preliminaries} \label{prel}
This section introduces the key components used in this work. First, KAN is briefly described and the Kolmogorov-Arnold approximation theorem is introduced. Next, the electro-thermal battery model and the core temperature estimation approach based on the KAN model are presented.
\subsection{KAN structure} \label{KAN}

The Kolmogorov–Arnold representation theorem states that for any multivariate continuous function on a bounded domain, there exists a finite collection of continuous univariate functions, which through additive composition can represent the former \cite{liu2024kan}.
By leveraging learning techniques, the KAN can approximate univariate continuous functions to represent nonlinear multivariate functional relationships, thereby harnessing the theorem’s representational power. Each KAN layer has input nodes and output nodes. Each input node passes through a learnable (univariate) activation function, which is then added to produce the value of an output node. The output nodes of one KAN layer become the input for the next KAN layer, forming a multi-layer model.

Mathematically, if a KAN has $M$ layers, then each layer is denoted by $n$ for $n \in \{1\dots M\}$. Subsequently, each layer $n$ can have $\beta_n$ number of input nodes and $\beta$ number of output nodes. The outputs of layer $n$ become the inputs of the next layer $(n+1)$. The value of the $p$-th input node in layer $n$ is written as $z_{n,p}$,  $\forall p \in \{1\dots \beta_n\}$, and the value of the $c$-th output node in the next layer is written as $z_{n+1,c}$,  $\forall c \in \{1\dots \beta_{n+1}\}$. The inputs $z_{n,p}$ pass through activation functions $\theta_{n,c,p}$, are then added to obtain the value of output nodes $z_{n+1,c}$. In compact matrix notation, we can write

\begin{equation} \label{KAN matrix}
   \text{z}_{n+1} = \underbrace{ \begin{pmatrix}
\theta_{n,1,1}(\cdot) & \dots & \theta_{n,1,\beta_n}(\cdot)\\
\vdots & \ddots & \vdots\\
\theta_{n,\beta_{n+1},1}(\cdot) & \dots & \theta_{n,\beta_{n+1},\beta_n}(\cdot)
    \end{pmatrix}}_{\Theta_{n}} \text{z}_n,
\end{equation}
\begin{align}
 \text{ where }    \text{z}_{n+1}=  \begin{pmatrix}
       {z}_{n+1,1}\\
       \vdots\\
       {z}_{n+1,\beta_{n+1}}
   \end{pmatrix} \, \text{ and } \text{z}_n=\begin{pmatrix}
        z_{n,1}\\
        \vdots\\
        z_{n,\beta_n},
    \end{pmatrix}
\end{align}
\noindent and $\Theta_n$ is the matrix activation function that learns the intrinsic non-linear structure in data and captures the relationship between the input variables and the output variable. Specifically, an $M$-deep KAN model with  $j$ input variables denoted by $\text{z}_i = [z_{i,1}\dots z_{i,j}]$ and a single output variable denoted by $\text{z}_o$ represent the functional relationship between $\text{z}_i$ and $\text{z}_o$ represented by, $\text{z}_o = (\Theta_{M} \circ \Theta_{M-1} \circ \cdots \circ \Theta_{1})\text{z}_i$. These activation functions $\Theta_{n}$ at each layer $n$ are represented as the composition of basis functions and learnable B-spline functions of $k-$th order over  $G$ grid points. The loss function for our KAN model consists of prediction errors, activation function entropy, and their $l_1$ regularization. 

Furthermore, we will use the following Kolmogorov–Arnold approximation theorem to derive analytical conditions to ensure the performance of the integrated KAN-rCBF.

\begin{thmm}[Kolmogorov–Arnold Approximation Theorem]\label{KAT}
  Let a multi-variate function $h(z)$ for $z=(z_1,...,z_j)$  be written as a composition of $k+1$ times differentiable univariate activation functions $\Theta_j, \forall j\in \{1,\hdots, M\}$ i.e. $h(z) = \bigl(\Theta_{M} \circ \Theta_{M-1} \circ \cdots \circ \Theta_{1}\bigr)(z)$. Let an $M$-layer KAN-network is learned with $G$ grid points and $k$-th order B-splines to represent this function and is denoted by $\bigl(\Theta_{M}^G \circ \Theta_{M-1}^G \circ \cdots \circ \Theta_{1}^G\bigr)(z).$ The error in approximation for the KAN-network is then given by:
  \begin{align}\label{KAN-approx}
      \bigl\| h - \bigl(\Theta_{M}^G \circ \Theta_{M-1}^G \circ \cdots \circ \Theta_{1}^G\bigr) \bigr\|_{C^m}
\leqslant \mathcal{M} G^{-(k+1-m)}, 
  \end{align}
 $\forall m \in \{1, \hdots, k\}$ and the constant $\mathcal{M}$ depends on the function $h$ and its smoothness.
\end{thmm}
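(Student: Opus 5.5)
The plan follows the argument of Liu et al.\ for the KAN approximation theorem: a layer-by-layer telescoping decomposition combined with classical B-spline approximation estimates. As in that setting, we read the hypothesis as saying that each entry $\theta_{n,c,p}$ of $\Theta_n$ is $(k+1)$ times continuously differentiable. We also assume that every intermediate variable $\text{z}_n = (\Theta_{n-1}\circ\cdots\circ\Theta_1)(z)$ ranges over a compact interval. This is guaranteed by continuity of the activations on the bounded input domain; we enlarge each interval slightly so that the perturbed intermediate values produced by the spline network also remain inside it.

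\textbf{Step 1: one-dimensional spline estimate.} For each univariate activation $\theta_{n,c,p}$ we invoke the standard quasi-interpolant bound for $k$-th order B-splines on a uniform grid of $G$ points over a compact interval. It gives a spline $\theta^G_{n,c,p}$ with
\begin{align*}
\|\theta_{n,c,p}-\theta^G_{n,c,p}\|_{C^m}\leqslant C_{n,c,p}\,G^{-(k+1-m)}, \qquad 0\leqslant m\leqslant k,
\end{align*}
where $C_{n,c,p}$ depends only on $\|\theta_{n,c,p}\|_{C^{k+1}}$ and the interval length. Summing over $p$ yields the same rate for each layer map, $\|\Theta_n-\Theta_n^G\|_{C^m}\leqslant C_n G^{-(k+1-m)}$, and the $C^{m}$ norms of the $\Theta_n^G$ are bounded uniformly in $G$.

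\textbf{Step 2: telescoping decomposition.} Write $R_n=\Theta_M\circ\cdots\circ\Theta_{n+1}$ for the exact tail and $P_{n-1}=\Theta^G_{n-1}\circ\cdots\circ\Theta^G_1$ for the approximate head, with the convention that an empty composition is the identity. Then
\begin{align*}
h-\Theta_M^G\circ\cdots\circ\Theta_1^G=\sum_{n=1}^{M}\Bigl(R_{n}\circ\Theta_n\circ P_{n-1}-R_{n}\circ\Theta_n^G\circ P_{n-1}\Bigr).
\end{align*}
Each summand replaces exactly one layer. By the mean value theorem, its $C^0$ norm is at most $\|R_n\|_{C^1}\,\|\Theta_n-\Theta_n^G\|_{C^0}$, which is $O(G^{-(k+1)})$.

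\textbf{Step 3 (main obstacle): higher derivatives.} For $m\geqslant 1$ we differentiate each summand using the multivariate Fa\`a di Bruno formula. The derivatives of $R_n$ and $P_{n-1}$ up to order $m$ are uniformly bounded, because the exact layers lie in $C^{k+1}$ and the splines have $G$-uniform $C^{m}$ bounds by Step 1. The key difficulty is that the difference $R_n\circ A - R_n\circ B$, with $A=\Theta_n\circ P_{n-1}$ and $B=\Theta_n^G\circ P_{n-1}$, must be controlled in $C^m$ by $\|A-B\|_{C^m}$ times a constant. To do this, write the difference as $\int_0^1 DR_n(B+s(A-B))\,ds\,(A-B)$ and apply the Leibniz rule. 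Derivatives falling on the integral factor require $R_n\in C^{m+1}$, which holds since $m\leqslant k$. Derivatives falling on $A-B$ are bounded by $\|\Theta_n-\Theta_n^G\|_{C^m}$ times powers of $\|P_{n-1}\|_{C^m}$. Each term is therefore $O(G^{-(k+1-m)})$. Summing the $M$ terms and collecting all constants into $\mathcal{M}$ gives \eqref{KAN-approx}, with $\mathcal{M}$ depending only on the $C^{k+1}$ norms of the activations (hence on $h$ and its smoothness), the domains, $M$ and $m$.
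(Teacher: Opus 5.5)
The paper gives no proof of this theorem. It quotes the result from Liu et al., so the relevant comparison is with their argument. Your proof is correct and shares its skeleton (one-dimensional B-spline estimates plus a telescoping sum over layers). However, you telescope in the opposite order, and this is a genuine improvement. Liu et al.\ write each residue as (spline tail)$\circ$(exact layer vs.\ spline layer)$\circ$(exact head). The natural mean-value bound for that residue in $C^m$ needs $C^{m+1}$ bounds, uniform in $G$, on compositions of degree-$k$ splines. These fail at $m=k$, because such splines are only $C^{k-1}$, and Liu et al.\ simply assert the residue estimate. With the exact tail $R_n$ outside and the spline head $P_{n-1}$ inside, the $(m+1)$-st derivatives fall only on compositions of $C^{k+1}$ exact layers, and the splines need only the $C^m$ bounds that your Step 1 actually delivers. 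You also address the domain issue (spline layers evaluated at perturbed intermediate values), which Liu et al.\ skip. To close it, build the enlarged boxes inductively layer by layer, take $G\geqslant G_0$ so that $P_{n-1}$ stays inside them, and absorb the finitely many smaller $G$ into $\mathcal{M}$. This tacitly requires the activations to be $C^{k+1}$ on neighborhoods of the exact ranges, and it makes $\mathcal{M}$ depend on the chosen representation rather than on $h$ alone.

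There are three smaller points. First, at $m=k$ the $C^k$ norms of $\Theta_n^G$ and $P_{n-1}$ must be read piecewise (i.e.\ as $W^{k,\infty}$ norms); your Fa\`a di Bruno argument still goes through in that sense. Second, your argument, like Liu et al.'s, is existential: it builds quasi-interpolant layers. So it bounds the error of \emph{some} spline network with $G$ grid points, not of one obtained by training, even though the statement says ``is learned''. Third, your Step 2 also yields the case $m=0$. That is the case the paper actually invokes later ($|e|\leqslant\mathcal{M}G^{-k-1}$), although the statement lists only $m\geqslant 1$.
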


Essentially, Theorem~\ref{KAT} expresses the error bound in estimating a function $h$ using the multi-layer KAN in terms of its structure.

\subsection{Estimating core temperature using KAN}
The electro-thermal battery dynamics \cite{vyas2022thermal} can be written in terms of battery states  $x= [\text{SoC}, T_c, T_s, T_{\infty}]^T$ as  
\begin{align}
\dot{x} = f(x) + g(x)U+ \nu(t) , \quad V  = V_o(\text{SoC})-uR_0,  \label{modelg} 
\end{align}

\noindent
where the battery dynamics $f(x)=\big[ 0, \,
-\tfrac{T_c - T_s}{R_c C_c}, \,
-\tfrac{T_s-T_c}{R_cC_s} - \tfrac{T_s-T_{\infty}}{R_sC_s}, \,
-\tfrac{T_{\infty}-T_s}{R_sC_{\infty}}\big]^T$; function $g(x) = [g_1(x), g_2, g_3]^T$ with $
g_1(x)=
\big[
-\tfrac{1}{C_{\text{bat}}}, \,
- \tfrac{\mathcal{E} T_c}{C_c}, \,
0, \,
0
\big]^T$, $g_2=
\big[
0, \,
\tfrac{R_0}{C_c} , \,
0, \,
0
\big]^T
$,  and  $g_3 =\left[0, \, 0, \, 0, \, -\tfrac{1}{C_{\infty}}\right]^T$. Here, $T_c, T_s,\, \text{and}\,\, T_{\infty}$ are the core, surface, and environment temperatures, respectively. The input vector $U = [u, u^2, \dot{Q}_c]^T$, where $u$ is the control input current, which is negative while charging and positive while discharging, and $\dot{Q}_c$ is the cooling power. The model uncertainties are represented by $\nu(t)= [\nu_{\text{SoC}}(t), \nu_c(t), \nu_s(t), \nu_{\infty}(t)]^T$, where the elements represent uncertainty in the SoC, core, surface, and ambient temperature dynamics. $C_{bat}$ is the battery capacity, $R_0$ is the ohmic resistance, and $R_c$, and $C_c$ are the thermal resistance and capacitance of the cell core, respectively. $R_s, C_s$ describe the resistance and capacitance related to heat transfer at the surface. $C_{\infty}$ is the heat capacity of the cooling system.  $\mathcal{E}$ is the entropic heat coefficient. The output equation is for the battery terminal voltage $V $, which uses the SoC-dependent open-circuit voltage $V_{o}(\text{SoC})$.  

From \eqref{modelg}, we can observe that there exists a function $h$ such that ${T}_c = h(I,\dot{Q}_c,T_s,T_{\infty})$. Moreover, $I$, $\dot{Q_c}$, $T_s$, and $T_{\infty}$ are readily measurable and are thus used as input features for our KAN-based core temperature model to estimate the core temperature $\widehat{T}_c$. Mathematically, the KAN core temperature estimator learns the following map:  
\begin{align} \label{KAN network}
  \widehat{T}_c = (\Theta_{M-1}^G \circ \cdots \circ \Theta_{1}^G\bigr)(I,\dot{Q}_c,T_s,T_{\infty}).  
\end{align}
To train this KAN-based thermal model, we considered a $2.3Ah$ $A123$ cylindrical $LiFePO_4-LiC_6$ battery cell parameters \cite{mallick2025kan} to generate the data from the lumped thermal model \eqref{modelg} using both constant and dynamic current profiles. A small Gaussian noise term $5\times10^{-4}\xi$, where $\xi \sim \mathcal{N}(0,1)$, is added to the thermal dynamics to represent model uncertainty.

The specific hyperparameter of our trained KAN structure is shown in Table~\ref{tab:MODEL_VARIABLES_AND_PARAMETERS}.

\begin{table}[ht]
    \centering
    \renewcommand{\arraystretch}{1.4} 
    \setlength{\tabcolsep}{8pt}
    \begin{tabular}{|c|c||c|c|}
    \hline
    Network width & $[4,1,1]$ & Training Epochs & $70$   \\
    \hline
     $G$ & $2$ & Batch  size & $ 2^{14}$ \\
    \hline
    $k$ & $3$  & Grid update stop & $50$  \\
    \hline
    $\text{Regularization} $ & $0.00005$ & Optimizer & L-BFGS  \\
    \hline
    $ l_1\text{ penalty}$ & $0.14$ & Entropy penalty & $0.17$   \\
    \hline
    \end{tabular}
    \vspace{1mm}
    \caption{Model structure and hyperparameters for the KAN used for core temperature estimation.}
    \label{tab:MODEL_VARIABLES_AND_PARAMETERS}
\end{table}

The KAN estimation error bound \eqref{KAN-approx} and the electro-thermal model \eqref{modelg} discussed in this section will be subsequently used to derive the robust control barrier function constraints in the next section.

\section{Integration of KAN and Robust CBF} \label{fast charging}

In this section, we will describe how the KAN-rCBF algorithm can be implemented to generate safe fast charging protocols using the KAN-estimated core temperature $\widehat{T}_c$ and the measurements data of surface temperature $T_s$, coolant temperature $T_{\infty}$, current $I$, and coolant power $\dot{Q}_c$. The scheme overview in Fig.~\ref{fig:overview} shows how the measured battery data is used to obtain the core temperature estimate from KAN and is subsequently sent to the KAN-rCBF policy optimizer along with the KAN estimate to obtain the optimal safe charging policy.

\begin{figure}[ht]
    \centering
    \includegraphics[width=.7\linewidth]{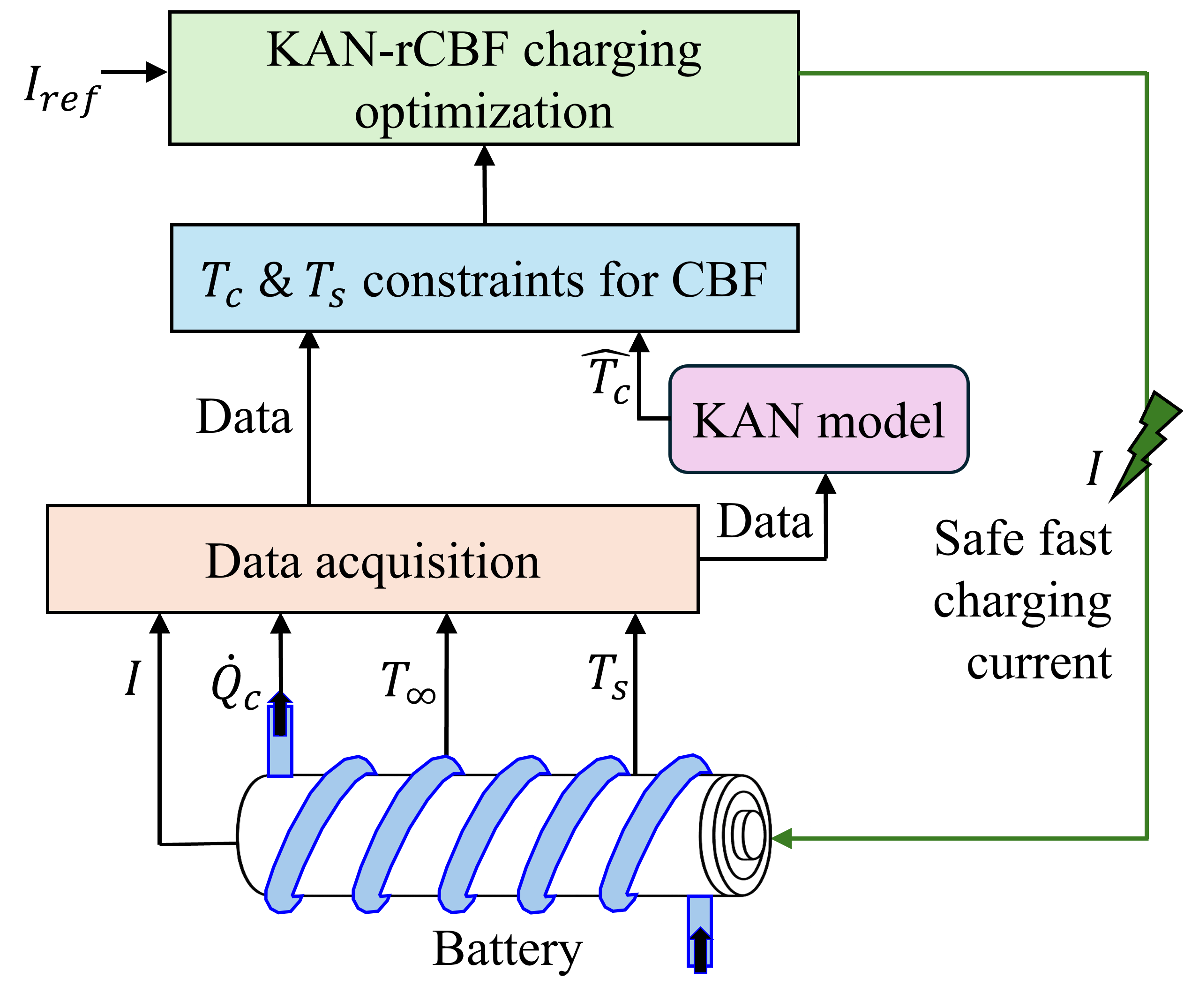} 
    \caption{Overview of the CBF safety-constrained fast charging with KAN core temperature estimation $\widehat{T}_c$ using $T_s, T_\infty, I$ and $\dot{Q}_c$.}
    \label{fig:overview}
\end{figure}

Before we present the KAN-rCBF charging optimization strategy, we will first state the assumptions in our framework and also define the relevant mathematical preliminaries.

\begin{assm}[Bounded system uncertainties] \label{Assumption2}
    The uncertainties $\nu_{\text{SoC}}(t),\,\nu_c(t),\, \nu_s(t), \,  \nu_{\infty}(t) \, \text{and} \, \dot{\nu}_s(t)$ in the battery model \eqref{modelg} are bounded for all $t \geqslant 0$. That is, there exist known positive constants $\overline{\nu}_{\text{SoC}},\,\overline{\nu}_c,\, \overline{\nu}_c,\, \overline{\nu}_{\infty}\, \text{and} \, \overline{d\nu_s}$ 
    such that $\forall t$,
    \begin{align*}
        |\nu_{\text{SoC}}(t)| &\leqslant \overline{\nu}_{\text{SoC}},\,\,
        |\nu_c(t)| \leqslant \overline{\nu}_c, \,\, |\nu_s(t)| \leqslant \overline{\nu}_s, \,\, \\ &|\nu_{\infty}(t)| \leqslant \overline{\nu}_{\infty} , \,\, |\dot{\nu}_s(t)| \leqslant \overline{d\nu_s}. 
    \end{align*}
\end{assm}

\begin{definition}[Safe set]
    For a system with states $x$ and input $U$, a safe set for the system can be defined in terms of a barrier function $B(x)$ that is continuously differentiable in $x$ as, 
    \begin{align}
        \mathcal{S} = \{x : B(x) \geqslant 0\}, \label{safetyset}
    \end{align}
    where $B(x) > 0$ represents safe operating conditions, while $B(x) = 0$ denotes the safety boundary.
\end{definition}

\begin{definition}[Forward invariance]
    A set $\mathcal{S}$ is a forward invariant for the system \eqref{modelg} if a starting solution $x(0) \in \mathcal{S}$ remains in $\mathcal{S}$ for all time $t\geqslant 0$.
\end{definition}
A battery is considered to charge safely when its state of charge (SoC), terminal voltage $V $, core temperature $T_c$, and surface temperature $T_s$ does not exceed the maximum allowable state of charge $\overline{\text{SoC}}$, terminal voltage $\overline{V}$, core temperature $\overline{T}_c$, and surface temperature $\overline{T}_s$, respectively.  Using these limits we define continuously differentiable barrier functions as
\begin{align}
    &B_1(x,u)=\overline{V}-V  = \overline{V} -V_o(\text{SoC})+uR_o,   \label{V11}\\
    &B_2(x)=\overline{\text{SoC}}-\text{SoC}, \label{soc s}\\
    &B_3(x)=\overline{T}_c-T_c, \label{llls}\\
    &B_4(x)= \overline{T}_s-T_s, \label{Ts}
\end{align}
to obtain the safe set for the fast charging control as
\begin{align}
    \mathcal{S}=\{x:B_1(x,u)\geqslant 0, B_2(x)\geqslant 0, B_3(x) \geqslant 0,B_4(x) \geqslant 0\}. \label{Safety set}
\end{align} 
The next theorem prescribes the conditions on the barrier functions $B_i(x), \, \forall i\in\{2,3,4\}$  defined in \eqref{soc s}-\eqref{Ts} such that the safe set \eqref{Safety set} is forward invariant. Alternatively, it prescribes the condition that the control input $u$ must satisfy to ensure any battery state that starts in the safe set will remain within the safe set for all time $t\geqslant 0$ \cite{ames2016control}. We note here that $B_1(x,u)$ depends explicitly on the input (i.e. it has relative degree $r=0$). Therefore, any input satisfying $B_1(x,u)\geqslant 0$ is sufficient to guarantee forward invariance of $S$, with no further conditions required.

\begin{thmm}[Constraints for rCBF \label{theory2}]
Consider a system $\dot{x}=f(x)+g(x)U+\nu(t)$ as in \eqref{modelg} and some safe set defined in \eqref{safetyset}, then a function $B(x)$ will be a barrier function for $r\geqslant 1$ for the system if $\forall r\geqslant 1$
\begin{align}
L_f^rB(x)+L_gL_f^{r-1}B(x)U+\Delta_r(x,t)\geqslant -K_{\alpha}\eta(x). \label{ecbf}
\end{align}
where the uncertainty term is defined as $\Delta_r(x,t)=\sum_{k=0}^{r-1}L_{\nu^{(k)}}L_f^{r-1-k}B(x)$,  the barrier state vector $\eta(x) = [B(x),\, L_f B(x),\, \dots,\, L_f^{r-1} B(x)]^{T}$ and the elements in $K_{\alpha}\in \mathbb{R}^r$ are such that main minors of the determinant of the corresponding Hurwitz matrix constructed from $K_\alpha$ are positive. The proof is provided in the Appendix~\ref{proof_hocbf}. If such a barrier function exists, then the safety of the set \eqref{safetyset} is forward invariant. 
\end{thmm}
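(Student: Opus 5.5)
The plan is to follow the standard exponential (higher-order) CBF argument of Nguyen and Sreenath, adapted to include the additive disturbance $\nu(t)$. The idea is to show that the barrier state $\eta$ obeys a linear differential inequality. Its homogeneous part is governed by a Hurwitz companion matrix built from $K_\alpha$. A comparison lemma then keeps $B(x(t))\geqslant 0$ whenever the initial state lies in the appropriate set.

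\textbf{Step 1: derivatives of $B$ along trajectories with disturbance.} Differentiate $B(x(t))$ repeatedly along solutions of $\dot x=f(x)+g(x)U+\nu(t)$. At each order $k<r$, assume the relative-degree condition $L_gL_f^{k-1}B\equiv0$. Under it, the $k$-th time derivative equals $L_f^kB(x)$ plus disturbance terms of the form $L_{\nu^{(\ell)}}L_f^{k-1-\ell}B(x)$, where $\ell<k$. At order $r$ the input appears, and we obtain
\[
B^{(r)}=L_f^rB+L_gL_f^{r-1}B\,U+\Delta_r(x,t).
\]
This is exactly where the sum $\Delta_r=\sum_{k=0}^{r-1}L_{\nu^{(k)}}L_f^{r-1-k}B$ comes from.

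For our battery barriers this is simple. $B_2,B_3,B_4$ are affine in $x$, so $r=1$ gives $\Delta_1=L_\nu B$. For $B_4$ with $r=2$, the term $\dot\nu_s$ appears, and Assumption~\ref{Assumption2} keeps it bounded. To keep the recursion clean, I would define the barrier state using the disturbance-inclusive derivatives
\[
\tilde\eta=[B,\dot B,\dots,B^{(r-1)}]^T,
\]
and note that it coincides with $\eta(x)$ up to the disturbance terms already absorbed in $\Delta_r$.

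\textbf{Step 2: linear dynamics of the barrier state.} Use \eqref{ecbf} to obtain
\[
B^{(r)}\geqslant -K_\alpha\tilde\eta .
\]
Write this as $\dot{\tilde\eta}=F\tilde\eta+Gw$, where $F$ is the Brunovsky chain, $G=e_r$, and $w=B^{(r)}\geqslant -K_\alpha\tilde\eta$. Setting $w=-K_\alpha\tilde\eta+\mu$ with $\mu(t)\geqslant0$ gives
\[
\dot{\tilde\eta}=(F-GK_\alpha)\tilde\eta+G\mu .
\]
Here $F-GK_\alpha$ is the companion matrix of $s^r+k_r s^{r-1}+\dots+k_1$. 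The Hurwitz-minor hypothesis on $K_\alpha$ (Routh–Hurwitz) makes it Hurwitz.

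\textbf{Step 3: positivity, which is the main obstacle.} Hurwitz stability alone does not give $B(t)\geqslant0$. We also need a positivity argument. Following Nguyen–Sreenath, I would choose $K_\alpha$ so that the polynomial factors as $\prod_i(s+p_i)$ with $p_i>0$ real. Then define the cascade
\[
\nu_0=B,\qquad \nu_i=\dot\nu_{i-1}+p_i\nu_{i-1}.
\]
Inequality \eqref{ecbf} becomes $\nu_r\geqslant0$. Solving each first-order comparison $\dot\nu_{i-1}\geqslant -p_i\nu_{i-1}$ backwards gives
\[
\nu_{i-1}(t)\geqslant e^{-p_it}\nu_{i-1}(0)\geqslant0,
\]
provided $\nu_{i-1}(0)\geqslant0$. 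This yields $B(x(t))\geqslant0$ for all $t$.

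So the statement must be read as follows. The initial state lies in the intersection of the sets $\{\nu_i\geqslant0\}$, which for $r=1$ is just $\mathcal S$. In addition, $K_\alpha$ must be chosen with real negative roots. I would remark on both points and note that for $r=1$ the argument reduces to the scalar comparison lemma.

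\textbf{Step 4: conclusion.} Apply the result to each $B_i$, $i\in\{2,3,4\}$. Combine it with the pointwise constraint $B_1(x,u)\geqslant0$, which holds directly since $B_1$ has relative degree zero. The intersection of the individual sets is then forward invariant, so \eqref{Safety set} is forward invariant.
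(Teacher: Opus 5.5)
Your proof is correct in substance, and it departs from the paper's at the one step that carries the conclusion.

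\textbf{Comparison with the paper.} The paper's appendix does what your Steps~1--2 do. It computes $B^{(r)}=L_f^rB+L_gL_f^{r-1}B\,U+\Delta_r$, writes the chain $\dot\eta=F\eta+G\mu$, picks $K_\alpha$ with $F-GK_\alpha$ Hurwitz, and imposes $B^{(r)}\geqslant-K_\alpha\eta$. Its remaining effort goes into robustification: assuming $\Delta_r\geqslant-\overline{\Delta}_r$, it shows that the worst-case constraint implies the disturbed inequality. It never derives $B(x(t))\geqslant 0$; invariance is implicitly deferred to standard ECBF theory under the Hurwitz hypothesis alone.

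Your Step~3 supplies that missing argument through the Nguyen--Sreenath cascade and the scalar comparison lemma. You are also right that it needs more than the statement provides. For $r\geqslant 2$, a Hurwitz $K_\alpha$ together with $x(0)\in\mathcal S$ does not suffice. A state on $\partial\mathcal S$ with $\dot B(0)<0$ leaves $\mathcal S$ immediately, and the input cannot prevent this because $L_gB=0$. Complex roots also let $B$ oscillate through zero. Your real-root and cascaded initial-condition hypotheses are therefore necessary. They are compatible with the paper's use: for $B_4$, $\lambda^2+0.6\lambda+0.05$ has roots $-0.1$ and $-0.5$, although the paper never checks $\dot B_4(0)+p_1B_4(0)\geqslant 0$.

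What the paper's route buys is the direct link to the implementable worst-case constraints of Proposition~1. You should add the one-line remark that the robust inequality with $-\overline{\Delta}_r$ implies your hypothesis.

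\textbf{Corrections.} First, in Step~1, $\tilde\eta$ does not coincide with the statement's $\eta(x)=[B,L_fB,\dots,L_f^{r-1}B]^\top$ up to terms already in $\Delta_r$. Write $K_\alpha=[k_1,\dots,k_r]$. For $r=2$ one has $\dot B=L_fB+L_\nu B$, so $-K_\alpha\eta=-K_\alpha\tilde\eta+k_2L_\nu B$. The term $k_2L_\nu B$ is not part of $\Delta_2=L_\nu L_fB+L_{\dot\nu}B$. Hence $B^{(r)}\geqslant-K_\alpha\tilde\eta$ does not follow from the statement as written, and you should say explicitly that you read $\eta$ as the vector of time derivatives. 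That is in fact what the paper's appendix uses, namely $\eta=[B,\dot B,\dots,B^{(r-1)}]^\top$. It is also what its $B_4$ constraint uses: the term $\alpha_4(L_fB_4-\nu_s)$ is the source of the coefficient $|\tfrac{1}{R_cC_s}+\tfrac{1}{R_sC_s}-\alpha_4|$ in $\Delta_4$.

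Second, your recursion, like the paper's, uses $\tfrac{d}{dt}L_{\nu^{(k)}}L_f^jB=L_{\nu^{(k+1)}}L_f^jB$. This requires $\nabla L_f^jB$ to be constant. For general $B$ there are extra terms $\nu^{(k)\top}\nabla^2(L_f^jB)\,\dot x$, which can even bring $U$ in below order $r$. The identity holds here only because the $B_i$ are affine and $f$ is linear, so the theorem should be restricted accordingly.

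Finally, rename your cascade functions, since $\nu_i$ collides with the disturbance $\nu(t)$.
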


To obtain a control policy $u$ that matches a desired policy $u_{des}$ under the rCBF conditions \eqref{ecbf}, we will solve the following optimization problem  
\begin{align}
U^* = \arg\min_{U} \|U - U_{\text{des}}\|^2,\, \forall t\geqslant 0. 
\end{align}
For fast charging, we want the control policy to be as close to the highest C-rate current $I_\text{ref}$ as possible. In terms of the battery dynamics, we can rewrite the rCBF constraints \eqref{ecbf} in terms of the charging current $u$ as
\begin{align}
&L_f^r B(x) +L_{g_1}L_f^{r-1} B(x)u+L_{g_2} L_f^{r-1} B(x)u^2 +L_{g_3}L_f^{r-1} B(x)\dot{Q}_c  +\Delta_r (x,t) \geqslant -K_{\alpha}\eta(x).  \label{robust_co}
\end{align}
For applying rCBF constraints \eqref{robust_co} to our battery system, we first determine the relative degree of each barrier function $B_i$ with respect to the control input $u$. Using the electro-thermal model \eqref{modelg}, we can readily derive that the voltage, SoC, core temperature, and surface temperature barrier functions have relative degrees 0, 1, 1, and 2, respectively. 

We will now present the analytical safety guarantees of our KAN-rCBF method in Proposition~1. 

\begin{table}[t]
\centering
\renewcommand{\arraystretch}{2.3} 
\begin{tabular}{|*{3}{c|}}
\hline

$A_1=\tfrac{1}{R_cC_c}$ & 
$A_2=\tfrac{\mathcal{E}}{C_c}$ &
{$A_3=\tfrac{R_0}{C_c}$}

\\ \hline

\multicolumn{3}{|l|}{
$ A_4=
\tfrac{1}{R_c^2C_cC_s}
+\tfrac{1}{R_c^2C_s^2}
+\tfrac{1}{R_cR_sC_s^2}
-\tfrac{\alpha_4}{R_cC_s}
$}
\\ \hline

\multicolumn{3}{|l|}
{$A_5= -A_4 -\tfrac{1}{R_cR_sC_s^2} -\tfrac{1}{R_s^2C_s^2} -\tfrac{1}{R_s^2C_sC_{\mathrm{bat}}}+\tfrac{\alpha_4}{R_sC_s}
-\alpha_3
$}  \\ \hline

\multicolumn{2}{|l|}
{$
A_6=
\tfrac{1}{R_cR_sC_s^2}
+\tfrac{1}{R_s^2C_s^2}
+\tfrac{1}{R_s^2C_sC_{\mathrm{bat}}}
-\tfrac{\alpha_4}{R_sC_s}
$} &$A_7=\alpha_3\overline{T}_s$
\\ \hline
\multicolumn{3}{|l|}
{$

A_8=\tfrac{\mathcal{E}}{R_cC_cC_s}\, A_9=-\tfrac{R_s}{R_cC_cC_s}
\,
A_{10}=\tfrac{1}{R_sC_sC_{\mathrm{bat}}}
$}
\\ \hline

\multicolumn{3}{|l|}
{$
{\Delta}_4\!=\!{}
-\!\left(\tfrac{1}{R_cC_s}\right)\overline{\nu}_c
\!-\!\left|\tfrac{1}{R_cC_s}+\tfrac{1}{R_sC_s}\!-\!\alpha_4
\right|\overline{\nu}_s\!-\!\left(\tfrac{1}{R_sC_s}\right)\overline{\nu}_{\infty}
\!-\!\overline{d\nu_s}
$}
\\ \hline

\end{tabular}
\caption{Constants for Proposition~\ref{main_prop1}.} \label{tab:constants}
\label{coeff}
\end{table}
\begin{propp}[KAN-rCBF charging optimization condition]\label{main_prop1}
    Consider the electro-thermal battery system in \eqref{modelg}, whose states are the state of charge SoC, core temperature $T_c$, surface temperature $T_s$, and $\widehat{T}_c$ is the estimated core temperature using the KAN-based thermal model \eqref{KAN network}. Let the safe set $\mathcal{S}$ be defined as in \eqref{Safety set}. If the charging control input $u$ is computed as the solution of the following quadratic optimization problem
\begin{align} \label{optimization}
    &u_{safe} = \arg \min_{u\in \mathcal U}||u - I_\text{ref}||^2,\\
     s.t. \quad &\overline V  - V_{o}(\text{SoC}) + R_0u \geqslant 0,    \label{opt_constrain}\\
    &\tfrac{u}{C_{\text{bat}}} + \alpha_1(\overline{\text{SoC}}-\text{SoC})-\overline{\nu}_{\text{SoC}}\geqslant 0, \label{u_cond} \\
    &H_c(\widehat{T}_c,T_s,u)+\overline{\nu}_c\geqslant - \tfrac{\mathcal{E}}{C_c} u \mathcal{M}G^{-k-1} \label{Hc_cond} \\
    &H_s(T_c,T_s, T_{\infty},\dot{Q}_c,u)+ {\Delta}_4\geqslant 0 \label{SSC}\\
    & \mathcal{U}=\{u \in \mathbb{R}:I_{ref} \leqslant u \leqslant 0\}, \label{curr}
\end{align} 
where $H_c(\widehat{T}_c,T_s,u)=-A_1 T_s+ A_2 \widehat{T}_cu-A_3u^2 + A_1 \overline{T}_c$ and $H_s(T_c,T_s, T_{\infty},\dot{Q}_c,u)
   =  A_4 T_c + A_5 T_s + A_6 T_{\infty} + A_7 + A_8 T_c u + A_9 u^2 + A_{10} \dot Q_c$, the constants $A_m,\, \forall m\in\{1,\dots,10\}$, and ${\Delta}_4$ are given in Table~\ref{coeff}, then states of the battery will remain in safe set $\mathcal{S}$ for all time $t\geqslant 0$.  
\end{propp}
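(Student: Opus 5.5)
The plan is to check the four barrier functions one at a time and show that each constraint of the quadratic program \eqref{optimization}--\eqref{curr} implies the corresponding rCBF inequality \eqref{ecbf} of Theorem~\ref{theory2}. Every state therefore keeps its own barrier nonnegative, and the intersection $\mathcal{S}$ in \eqref{Safety set} is forward invariant.

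\textbf{Voltage ($B_1$).} This barrier has relative degree $0$, as noted before Theorem~\ref{theory2}. Constraint \eqref{opt_constrain} is exactly $B_1(x,u)\geqslant 0$, so nothing further is needed.

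\textbf{State of charge ($B_2$), $r=1$.} From \eqref{modelg}, $\dot B_2=-\dot{\text{SoC}}=u/C_{\text{bat}}-\nu_{\text{SoC}}$. Hence $\dot B_2+\alpha_1B_2\geqslant u/C_{\text{bat}}+\alpha_1B_2-\overline{\nu}_{\text{SoC}}$ by Assumption~\ref{Assumption2}. The right-hand side is nonnegative by \eqref{u_cond}, and Theorem~\ref{theory2} with $K_\alpha=\alpha_1>0$ gives invariance of $\{B_2\geqslant0\}$.

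\textbf{Core temperature ($B_3$), $r=1$.} Here
\[
\dot B_3=\tfrac{T_c-T_s}{R_cC_c}+\tfrac{\mathcal{E}T_c}{C_c}u-\tfrac{R_0}{C_c}u^2-\nu_c .
\]
The true $T_c$ is not available, so we write $T_c=\widehat T_c+e$. Theorem~\ref{KAT} with $m=0$ gives $|e|\leqslant\mathcal{M}G^{-(k+1)}$; this needs the sup-norm case, which is the $C^0$ part of the $C^m$ bound. The term $\tfrac{\mathcal{E}}{C_c}ue$ is then bounded below by $-\tfrac{\mathcal{E}}{C_c}|u|\mathcal{M}G^{-k-1}$, which is the right-hand side of \eqref{Hc_cond} since $u\leqslant 0$ on $\mathcal{U}$.

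The linear term $\tfrac{T_c}{R_cC_c}$ must be handled together with the class-$\mathcal{K}$ term $\alpha_2(\overline T_c-T_c)$. If $\alpha_2=A_1$, then $T_c$ cancels and leaves $A_1\overline T_c-A_1T_s$, which is exactly the form of $H_c$. So $\dot B_3+A_1B_3\geqslant H_c(\widehat T_c,T_s,u)+\tfrac{\mathcal{E}}{C_c}u\,\mathcal{M}G^{-k-1}+\overline\nu_c$ type terms, and \eqref{Hc_cond} yields the rCBF condition. I must check that the $\nu_c$ sign convention matches \eqref{Hc_cond}; $-\nu_c\geqslant-\overline\nu_c$ suggests the paper's sign may be a typo, and I would follow Theorem~\ref{theory2}'s $\Delta_1$.

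\textbf{Surface temperature ($B_4$), $r=2$ — the main obstacle.} Here I compute $L_fB_4$ and $L_f^2B_4$, and $L_gL_fB_4$ from $L_fB_4=\tfrac{T_s-T_c}{R_cC_s}+\tfrac{T_s-T_\infty}{R_sC_s}$. This gives the $u$ and $u^2$ terms through $\dot T_c$, with coefficients $A_8T_cu$ and $A_9u^2$, and the $\dot Q_c$ term through $\dot T_\infty$, with coefficient $A_{10}$. Next I form $\ddot B_4+\alpha_4\dot B_4+\alpha_3B_4$ with $K_\alpha=[\alpha_3,\alpha_4]$ Hurwitz and collect coefficients of $T_c,T_s,T_\infty$ and the constant $\alpha_3\overline T_s=A_7$, which should reproduce $A_4,\dots,A_{10}$ of Table~\ref{coeff}.

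The uncertainty term $\Delta_2=L_{\nu}L_fB_4+L_{\dot\nu}B_4$ contains $\nu_c,\nu_s,\nu_\infty$ weighted by partial derivatives of $L_fB_4+\alpha_4B_4$, plus $-\dot\nu_s$. Bounding each in absolute value via Assumption~\ref{Assumption2} gives $\Delta_2\geqslant\Delta_4$. Constraint \eqref{SSC} then implies \eqref{ecbf} for $r=2$.

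The hard part is this bookkeeping: reproducing the table constants exactly. The table's $A_9$ and $A_{10}$ involve $R_s$ and $C_{\text{bat}}$ in ways that need careful checking. The $B_4$ constraint also uses the true $T_c$; I would either treat $T_c$ as handled there, as the statement does, or substitute $\widehat T_c$ with a further KAN error margin.

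Combining the four invariances, Theorem~\ref{theory2} gives $x(t)\in\mathcal{S}$ for all $t\geqslant0$. This holds provided the QP is feasible on $\mathcal{U}$, which I would state as implicit in the assumption that $u_{safe}$ exists.
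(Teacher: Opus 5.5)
Your proposal is correct and follows the paper's proof essentially step for step. It uses the same barrier-by-barrier check with relative degrees $0,1,1,2$. It makes the same choice $\alpha_2=A_1$, so that $T_c$ cancels and the KAN error survives only in $\tfrac{\mathcal{E}}{C_c}ue$; this term is bounded, as in the paper, by the $m=0$ case $|e|\leqslant\mathcal{M}G^{-k-1}$, even though Theorem~\ref{KAT} is stated only for $m\geqslant 1$. It also uses the same worst-case bounding of $\nu_c,\nu_s,\nu_\infty,\dot\nu_s$, with $\nu_s$ weighted by $\tfrac{1}{R_cC_s}+\tfrac{1}{R_sC_s}-\alpha_4$, to produce $\Delta_4$.

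The inconsistencies you flagged are genuine typos in the paper, not gaps in your argument. The paper's own derivation gives $H_c-\overline{\nu}_c$ rather than $H_c+\overline{\nu}_c$ in \eqref{Hc_cond}. Direct computation gives $A_9=-\tfrac{R_0}{R_cC_cC_s}$ and $A_{10}=\tfrac{1}{R_sC_sC_\infty}$, and $C_\infty$ should also replace $C_{\text{bat}}$ in $A_5$ and $A_6$.
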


\begin{proof} 
The proof is presented in Appendix~\ref{proof}.
\end{proof}
\begin{algorithm}[ht]
\caption{KAN-rCBF Fast Charging Current Optimization Algorithm}\label{algorithm}

\KwIn{Time $t$, Initial current $u(0)$, coolant power $\dot{Q_c}$, coolant temperature $T_{\infty}(t)$, surface temperature $T_s(t)$, \texttt{KAN} model \eqref{KAN network} , $A_m,\, \forall m\in\{1,\dots,10\}$}
\KwOut{Optimized current $u$.}

\SetKwFunction{KT}{KAN\_Therm}
\SetKwFunction{CO}{rCBF}

\For{$t \geqslant 0$}{ 
    Forward-pass through \texttt{KAN} \eqref{KAN network} to get $\widehat{T}_c$\;
    $u(t+1) \gets$ \CO{$u(t),\dot{Q_c}, T_{\infty}(t), T_s(t), \widehat{T}_c(t)$}\;
}

\SetKwProg{Fn}{Function}{:}{}
\Fn{\CO{$ u(t), \dot{Q_c}, T_{\infty}(t), T_s(t)$}}{
    Compute $u_{safe}$ using \eqref{optimization} -- \eqref{curr}\;
    \textbf{return} $u_{safe}$\;
}

\end{algorithm}

\section{Simulation results} \label{sim}
This section presents simulation results that evaluate the proposed KAN-rCBF framework and compare it with the recent state-of-the-art surface temperature constrained CBF (S-CBF) method \cite{feng2024safe}. To evaluate the effectiveness of the proposed KAN-rCBF strategy and baseline, optimizations are solved for a desired reference $I_{ref}=23 A$ (i.e. 10C) fast-charging scenario. The implementation of Algorithm~\ref{algorithm} is based on the safety constraints developed in Proposition~\ref{main_prop1}. These safety constraints are enforced throughout the charging process, with $0 \leqslant V_t \leqslant \overline{V}_t = 3.6V$, $T_c,\,T_s \leqslant \overline{T} = 323K$, and $0 \leqslant \mathrm{SoC} \leqslant 1$.  The CBF parameters are chosen as $\alpha_1 = 0.05$, $\alpha_2 = 0.0246$, $\alpha_3 = 0.05$, and $\alpha_4 = 0.6$, providing a reasonable balance between responsiveness and robustness. In the simulations, the uncertainty bounds are selected as $\overline{\nu}_c = 0.0005$, $\overline{\nu}_s = 0.0005$, $\overline{\nu}_{\infty} = 0.0005$, and $\overline{d\nu_s} = 0.1$. These uncertainty bounds are incorporated into the CBF constraints through worst-case margins, ensuring that the safety requirements are satisfied during operation.
All computations have been conducted on a Dell Precision 3660 workstation equipped with an Intel Core i7-13700 CPU (16 cores, 24 threads), 32 GB RAM, and an NVIDIA RTX A2000 GPU with 12 GB VRAM. GPU acceleration was enabled using NVIDIA driver 573.44 with CUDA 12.8, and the implementation was developed in Python 3.13.5 using the PyTorch ecosystem. The optimal charging current was computed using the Sequential Least Squares Programming (SLSQP) algorithm from the SciPy optimization library. 
    In this work, SoC is generated from the battery model through current counting and thus the uncertainty $\nu_{\text{SoC}}=0$. However, if the SoC is estimated, this uncertainty can be incorporated using a similar strategy as used for the rest of the states. 

Before presenting the results for KAN-rCBF, we will present the accuracy of the KAN core temperature estimator.

\textbf{Performance of KAN:} The estimated temperature $\widehat{T}_c$ obtained from KAN and the true temperature $T_c$ are shown in Fig.~\ref{error}. The lower plot in Fig.~\ref{error} shows the error in the KAN estimate and its upper bound is used within our KAN-rCBF constraints. The error shows that the estimator closely tracks the true temperature during the charging process, and it achieves a root mean square error (RMSE) of $0.0417K$. The estimation error stays within the analytical bound from Proposition~\ref{main_prop1} throughout the operating range. This shows that the theoretical guarantee is reflected in practice and that the added correction term captures the uncertainty effectively. This result also shows that the estimated temperature is reliable for use in our proposed KAN-rCBF charging strategy.
\begin{figure}[h!]
    \centering
    \includegraphics[width=.65\linewidth]{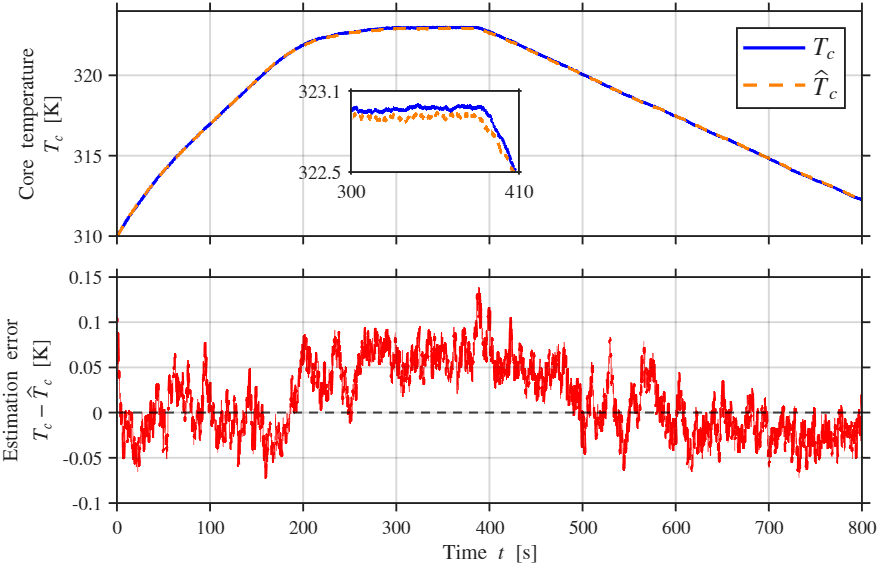}
    \caption{The upper plot compares the true core temperature $T_c$ and the estimated temperature $\widehat{T}_c$. The lower plot shows the estimation error $T_c-\widehat{T}_c$.  } \label{error} 
\end{figure}

\textbf{Performance of Baseline:} The S-CBF methodology enforces safety constraints on terminal voltage, battery temperature, and state of charge (SoC) through a cascade structure \cite{feng2024safe}. Nevertheless, there are a few shortcomings of this method. \textbf{(i)} Most important among them is the lack of a core temperature constraint, which makes it vulnerable to internal overheating, since the battery core can reach a higher temperature than the surface \cite{li2024internal}. 
\textbf{(ii)} The framework assumes perfect knowledge of the battery states and does not consider model uncertainty, which reduces its robustness for real-world applications. \textbf{(iii)}  The model used for deriving the S-CBF constraints also neglects the reversible heat source term $\mathcal{E}$. This simplification enables the use of a cascade CBF structure to separate electrical and thermal safety constraints into sequential QP problems with linear constraints. We have assessed that such a simplification is unnecessary and can also lead to poorer thermal management. Specifically, both the cascade CBF and the standard single-step CBF formulations were tested for our method, and they showed similar results in terms of charging behavior and satisfaction of the safety constraints. 
Fig.~\ref{fig:placeholder} shows the simulation results under the $10C$ fast-charging reference and includes the S-CBF optimal charging current, terminal voltage, state-of-charge (SoC), surface temperature $T_s$, and core temperature $T_c$ (from top to bottom). Since the S-CBF approach does not directly have control over the core temperature, the core temperature reaches about $327.97K$ in this case, which is nearly $5K$ above the safe temperature limit. This shows that surface temperature alone does not indicate the internal thermal condition of the battery. In addition, the S-CBF strategy fails to remain below the safe surface temperature limit, as $T_s$ reaches a maximum of $323.11K$. This occurs because the algorithm is not robust to model uncertainty. Overall, this S-CBF control approach does not consistently ensure thermal safety.

\textbf{Performance of KAN-rCBF:}
Fig.~\ref{fig:placeholder} also shows that the optimal fast charging policy obtained from the proposed KAN-rCBF strategy can satisfy the thermal constraints for the battery. Even in the presence of system uncertainty, the core temperature of the battery attains a maximum of $323K$, whereas 
the surface temperature remains within safe limits, with a maximum temperature of around $319.06K$.  We observe the major difference between S-CBF and KAN-rCBF starting around 171.60 seconds when the charging current for KAN-rCBF is reduced as the core temperature approaches its safety limit. In contrast, S-CBF continues to charge at a higher C-rate as it only monitors the surface temperature. This results in a slightly longer time of  $784.54$ seconds  to fully charge the battery using KAN-rCBF,  compared to $712.96$ seconds for the S-CBF strategy. This charging time difference of approximately 1.2 minute is a necessary trade-off to maintain safe internal temperatures.
\begin{figure}[ht!] \label{comparison}
    \centering 
    \includegraphics[width=.6\linewidth]{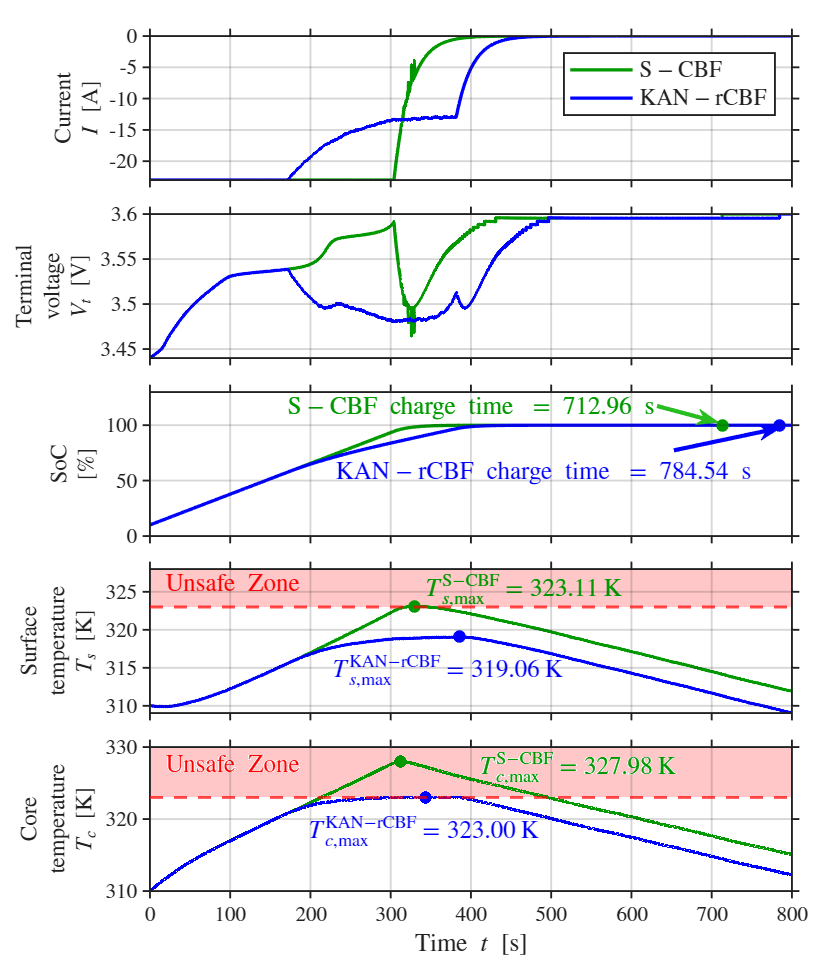}
    \caption{Performance comparison of baseline S-CBF \cite{feng2024safe} and proposed KAN-rCBF charging algorithm, where the latter achieves improved safety in temperature regulation while maintaining comparable charging time.}
    \label{fig:placeholder}
\end{figure}

\section{Conclusion And Future Work} \label{conclu}
This work develops a robust battery safety control framework that considers model uncertainty and uses KAN for core temperature estimation to explicitly include a core temperature constraint within the CBF framework. Analytical guarantees are also developed to ensure safety in the presence of KAN estimation errors and model uncertainty. The results show that it provides a reliable estimate of the core temperature for control and safety purposes.  A comparison with a state-of-the-art surface temperature constrained CBF charging method was used as a representative example. We showed that relying only on surface temperature constraints can allow the internal temperature to exceed safe levels, with the core temperature reaching about $327.97K$ and the surface temperature about $323.11K$. By constraining the core temperature with the proposed noise-aware model, both the core and surface temperatures were maintained within safe limits, with the core temperature remaining near $323K$. This improved safety required a moderate increase in charging time, approximately $1.2$ minutes. The increase in charging time reflects the prioritization of thermal safety.

\bibliography{ref1.bib}

\appendix
\section{Proofs}
\subsection{Proof of the HOCBF Condition}
\label{proof_hocbf}
\begin{proof}
Consider the system $\dot{x}=f(x)+g(x)U+\nu(t)$ as in \eqref{modelg}.
Let $B(x)$ be a barrier function. Its first derivative along the system dynamics is
\begin{align}
\dot{B}(x)=L_fB(x)+L_gB(x)U+L_{\nu}B(x).
\end{align}
If the control input does not appear in $\dot{B}$, then $L_gB(x)=0$. Differentiating once more yields
\begin{align}
B^{(2)}(x)=L_f^2B(x)+L_gL_fB(x)\,U+L_{\nu}L_fB(x)+L_{\dot{\nu}}B(x),
\end{align}
where the uncertainty term follows from $\frac{d}{dt}(L_{\nu}B)=L_{\dot{\nu}}B$, since $\nu(t)$ depends only on time. Since the input is also absent from $B^{(2)}$, so, $L_gL_fB(x)=0$.

By induction, the input is absent from all derivatives up to order $r-1$ if
\begin{align}
L_gL_f^jB(x)=0, \quad j=0,1,\dots,r-2,
\end{align}
The input $U$ appears for the first time in the $r$-th derivative through $L_gL_f^{r-1}B(x)U$, with $L_gL_f^{r-1}B(x)\neq 0$. Hence, $B(x)$ has relative degree $r$.

Next, consider the uncertainty contribution. For any $j$ and $k$,
\begin{align}
L_{\nu^{(k)}}L_f^jB(x)=\frac{\partial L_f^jB(x)}{\partial x}\nu^{(k)}(t).
\end{align}
Since $\nu(t)$ depends only on time, differentiation only increases the order of the uncertainty:
\begin{align}
\frac{d}{dt}\big(L_{\nu^{(k)}}L_f^jB(x)\big)=L_{\nu^{(k+1)}}L_f^jB(x).
\end{align}
Therefore, repeated differentiation produces the sequence
\begin{align}
L_{\nu}L_f^{r-1}B,\quad L_{\dot{\nu}}L_f^{r-2}B,\quad \dots,\quad L_{\nu^{(r-1)}}B.
\end{align}
Thus, the $r$-th derivative of $B(x)$ can be written as
\begin{align}
B^{(r)}(x,U)=L_f^rB(x)+L_gL_f^{r-1}B(x)U+\Delta_r(x,t),
\end{align}
where
\begin{align}
\Delta_r(x,t)=\sum_{k=0}^{r-1}L_{\nu^{(k)}}L_f^{r-1-k}B(x).
\end{align}

Define the barrier-state vector
\begin{align}
\eta(x)=\begin{bmatrix}B(x) & \dot{B}(x) & \cdots & B^{(r-1)}(x)\end{bmatrix}^{\top}.
\end{align}
Then
\begin{align}
\dot{\eta}=F\eta+G\mu, \quad \mu=B^{(r)}(x,U),
\end{align}
where
\begin{align}
F=
\begin{bmatrix}
0 & 1 & 0 & \cdots & 0\\
0 & 0 & 1 & \cdots & 0\\
\vdots & \vdots & \vdots & \ddots & \vdots\\
0 & 0 & 0 & \cdots & 1\\
0 & 0 & 0 & \cdots & 0
\end{bmatrix},
\quad
G=
\begin{bmatrix}
0\\
0\\
\vdots\\
0\\
1
\end{bmatrix}.
\end{align}
Choose $K_{\alpha}=\begin{bmatrix}\alpha_1 & \alpha_2 & \cdots & \alpha_r\end{bmatrix}$ such that $F-GK_{\alpha}$ is Hurwitz. The higher-order barrier condition is imposed as
\begin{align}
B^{(r)}(x,U)\geqslant -K_{\alpha}\eta(x).
\end{align}
Substituting $B^{(r)}(x,U)$ gives
\begin{align}
L_f^rB(x)+L_gL_f^{r-1}B(x)U+\Delta_r(x,t)\geqslant -K_{\alpha}\eta(x).
\end{align}

Assume now that $\Delta_r(x,t)\geqslant -\overline{\Delta}_r$. Then
\begin{align}
L_f^rB(x)+&L_gL_f^{r-1}B(x)U+\Delta_r(x,t)\geqslant \nonumber \\
&L_f^rB(x)+L_gL_f^{r-1}B(x)U-\overline{\Delta}_r.
\end{align}
Therefore, if
\begin{align}
L_f^rB(x)+L_gL_f^{r-1}B(x)U-\overline{\Delta}_r\geqslant -K_{\alpha}\eta(x), \label{rrho}
\end{align}
it follows that
\begin{align}
L_f^rB(x)+L_gL_f^{r-1}B(x)U+\Delta_r(x,t)\geqslant -K_{\alpha}\eta(x).
\end{align}
This shows that the robust HOCBF condition \eqref{rrho} is sufficient to guarantee the desired HOCBF inequality.
\end{proof}

\subsection{Proof of Proposition}\label{proof}
To establish forward invariance of the safe set $\mathcal{S}$, we will show that each of the barrier functions $B_i, \forall i\in\{1,2,3,4\}$ \eqref{V11}-\eqref{Ts} satisfy the rCBF conditions. 

\vspace{2mm}
\noindent
\textbf{Voltage constraint:} 
Since the control input $u$ appears explicitly in $B_1(x,u)$, the relative degree of $B_1(x,u)$ is $0$. Hence, using \eqref{V11} the rCBF condition  $B_1(x,u)\geqslant 0$ yields our first rCBF constraint \eqref{opt_constrain}.

\vspace{2mm}

\noindent
\textbf{SoC constraint:} 
The gradient of the safety function \eqref{soc s} with respect to the state vector $x$ is given by
$\nabla B_2(x) = [-1,  0,  0, 0]^T$,
which shows that $\dot{B}_2(x)$ depends only on the SoC dynamics. We derive the following Lie derivatives to determine the relative degree of $B_2$:
\begin{align}
    &L_f B_2(x) = \nabla B_2^T(x) f = 0,\label{lfb2}\\ 
    &L_{g_1} B_2(x)
= (-1)\left(-\tfrac{1}{C_{\text{bat}}}\right) = \tfrac{1}{C_{\text{bat}}},\label{lg1b2}\\
&L_{g_2} B_2(x)= L_{g_3} B_2(x) =0, \, L_{\nu} B(x) = - \nu_{SoC}. \label{lg2b2}
\end{align}
Non-zero $L_{g_1} B_2(x)$ in \eqref{lg1b2} shows that the relative degree is 1 and then the rCBF condition for SoC barrier function $B_2$ \eqref{soc s} is given by
\begin{align}
  L_f B_2(x)& +  L_{g_1} B_2(x) u +L_{g_2} B_2(x) u^2   +L_{g_3} B_2(x)\dot{Q}_c +L_{\nu} B_2(x) + \alpha_1(B_2(x))    \geqslant 0  \label{soc c}
\end{align}
where $\alpha_1>0$. 
Substituting \eqref{lfb2}-\eqref{lg2b2} in \eqref{soc c}, we obtain 
\begin{align}\label{soc_con2}
    \tfrac{u}{C_{\text{bat}}} + \alpha_1(\overline{SoC}-SoC)-{\nu}_{SoC}\geqslant 0
\end{align}
We note that $-{\nu}_{SoC}\geqslant-\overline{\nu}_{SoC}$. Therefore, if our second rCBF constraint  \eqref{u_cond} is satisfied, then \eqref{soc_con2} will be guaranteed to be satisfied.

\vspace{2mm}
\noindent
\textbf{Core temperature constraint:} The barrier function \eqref{llls} is used to enforce the safety requirement so that the core temperature $T_c$ does not exceed the upper bound $\overline{T}_c$.
Similarly, we can derive that its relative degree is 1 and it satisfies the robust CBF condition \eqref{robust_co} 
\begin{align} 
    L_f B_3(x) + L_{g_1} B_3(x)u &+ L_{g_2} B_3(x)u^2 +L_{g_3} B_3(x)\dot{Q}_c +L_{\nu} B_3(x)+ \alpha_2 B_3(x) \geqslant 0, \label{zeror}
\end{align}
where $\alpha_2>0$. From \eqref{llls},  $\nabla B_3(x) =
[
0, \, -1, \, 0, \, 0
]^T$.
to produce the terms in \eqref{zeror}
\begin{align}
    L_f B_3(x)=-&\tfrac{T_s - T_c}{R_c C_c}, \, L_{g_1} B_3(x) =\tfrac{\mathcal{E} T_c}{C_c}, \, L_{g_2} B_3(x) = -\tfrac{R_0}{C_c}, \nonumber\\ & L_{g_3} B_3(x) = 0,\, L_{\nu} B_3(x)=- \nu_c.
\end{align}
Evaluating \eqref{zeror} then yields
\begin{align}
-\tfrac{T_s - T_c}{R_c C_c} + \tfrac{\mathcal{E} T_c}{C_c}u &- \tfrac{R_0}{C_c}u^2-  \nu_c + \alpha_2(\overline{T}_c-T_c) \geqslant 0. \label{nnnn}
\end{align}
Now, rearranging \eqref{nnnn} and replacing the constants from Table~\ref{tab:constants}, we obtain
\begin{align}
A_1(-T_s
+T_c)+A_2T_c u
-A_3u^2- \nu_c
+\alpha_2(\overline {T}_c-T_c)
\geqslant 0.\label{llll}
\end{align} 
The above condition depends on the core temperature state $T_c$, which is not available and thus must be replaced by its KAN estimate $\widehat{T}_c$. We replace $T_c=\widehat{T}_c + e$ in all three terms of \eqref{llll}, where $e=(T_c-\widehat{T}_c)$ is the KAN estimation error. This yields
\begin{align}
-A_1 T_s+&A_1 \widehat{T}_c+ A_2 \widehat{T}_cu-A_3u^2- \nu_c + \alpha_2(\overline{T}_c-\widehat{T}_c) \geqslant \alpha_2 e - (A_1+A_2u)e \label{35555}
\end{align}
 Choosing the $\alpha_2= A_1 >0$ in \eqref{35555} results in
\begin{align}
    H_c(\widehat{T}_c,T_s,u)\! - \nu_c \geqslant - \tfrac{\mathcal{E}}{C_c}ue, \label{vvv}
\end{align}
where $H(\widehat{T}_c,T_s,u)=-A_1 T_s+ A_2 \widehat{T}_cu-A_3u^2 + A_1 \overline{T}_c$.
Next, we will again note that $-\nu_c \geqslant - \overline \nu_c$. This implies that if 
  $  H(\widehat{T}_c,T_s,u) - \overline \nu_c \geqslant -\tfrac{\mathcal{E}}{C_c}ue,$ then \eqref{vvv} will be true.

Now, using the bound from the Kolmogorov-Arnold approximation theorem \eqref{KAN-approx}, we can write $-|e|\geqslant -\mathcal{M}G^{-k-1}$  to obtain constraint \eqref{Hc_cond}.

\vspace{2mm}
\noindent
\textbf{Surface temperature constraint:} 
Similarly, the gradient of the barrier function \eqref{Ts} is $ \nabla B_4(x)=[
0, 0, -1, 0]^T.$
Using the system dynamics \eqref{modelg}, we can derive $L_{g_1}B_4(x)=L_{g_2}B_4(x)=L_{g_3}B_4(x)=0$. Additionally, $L_\nu B_4(x)=\nabla B_4(x)\nu(t) =-\nu_s(t)$, which implies that $\nabla L_\nu B_4(x)=0$ and thus its contribution will not be propagated to the next derivative of $B_4$. 
Now, we calculate 
\begin{align}
L_f B_4(x)
&=
\nabla B_4(x)f(x) =
\tfrac{T_s-T_c}{R_cC_s}
+\tfrac{T_s-T_\infty}{R_sC_s}, \label{Lfb4}
\end{align}
and its the gradient 
\begin{align}
\nabla(L_fB_4)=
\begin{bmatrix}
0 &
-\tfrac{1}{R_cC_s} &
\tfrac{1}{R_cC_s}+\tfrac{1}{R_sC_s} &
-\tfrac{1}{R_sC_s}
\end{bmatrix},
\end{align}
to obtain the terms with input for $\Ddot{B}_4$:
\begin{align} 
L_{g_1}L_{f}B_4
=\tfrac{\mathcal{E} T_c}{R_cC_cC_s}&,\quad
L_{g_2}L_{f}B_4
=-\tfrac{R_s}{R_cC_cC_s},\nonumber\\
L_{g_3}L_{f}B_4
&=\tfrac{1}{R_sC_sC_{\text{bat}}} \label{lg1}
\end{align}
Since the terms in \eqref{lg1} are non-zero, it shows that the relative for $B_4$ dynamics is 2, and the rCBF constraint \eqref{ecbf} then becomes
\begin{align}
    &L_f^2 B_4(x) +L_{g_1} L_f B_4(x)u+L_{g_2}   L_f B_4u^2 + L_{g_3}L_{f}B_4(x)\dot Q_c  \! +L_{\nu}L_f B_4(x) +\alpha_4 \left(L_f B_4(x)-\nu_s(t)\right)+ \alpha_3 B_4(x)\geqslant 0, \label{ttts}
\end{align}
where $\alpha_3, \alpha_4 > 0$ are chosen such that the polynomial $\lambda^2+\alpha_4 \lambda+\alpha_3$ is Hurwitz.
We will now calculate the remaining terms in \eqref{ttts} to obtain:
\begin{align}
L_f^2B_4(x)&=\tfrac{T_c-T_s}{R_c^2C_sC_c}
-\left(\tfrac{1}{R_cC_s}+\tfrac{1}{R_sC_s}\right) \left(
\tfrac{T_s-T_c}{R_cC_s}
+\tfrac{T_s-T_\infty}{R_sC_s}
\right)
+\tfrac{T_\infty-T_s}{R_s^2C_sC_{\mathrm{bat}}},\label{lf2b4}\\
L_\nu L_fB_4(x)&= -\tfrac{\nu_c(t)}{R_cC_s}\!+\!\left(\tfrac{1}{R_cC_s}\!+\!\tfrac{1}{R_sC_s}\right)\!\nu_s(t)\!
-\!\tfrac{\nu_{\infty}(t)}{R_sC_s}. \label{lnlfb4}
\end{align}
Substituting \eqref{lg1}, \eqref{lf2b4}, \eqref{lnlfb4} in \eqref{ttts}

\begin{align}
&\tfrac{T_c-T_s}{R_c^2C_sC_c}\!-\!\!\left(\tfrac{1}{R_cC_s}\!\!+\!\!\tfrac{1}{R_sC_s}\right)\left(
\tfrac{T_s-T_c}{R_cC_s}\!\!+\!\!\tfrac{T_s-T_\infty}{R_sC_s}
\right)\!\!+\!\!\tfrac{T_\infty-T_s}{R_s^2C_sC_{\mathrm{bat}}}
\!\!+\!\!\tfrac{\alpha T_c}{R_cC_sC_c}u-\tfrac{R_s}{R_cC_sC_c}u^2+\tfrac{1}{R_sC_sC_{\mathrm{bat}}}\dot Q_c
-\tfrac{\nu_c(t)}{R_cC_s}\nonumber\\&\!\!+\!\!\left(\tfrac{1}{R_cC_s\!\!}+\tfrac{1}{R_sC_s}\right)\nu_s(t)-\tfrac{\nu_{\infty}(t)}{R_sC_s}
-\dot \nu_s(t)+\alpha_4\left(\tfrac{T_s-T_c}{R_cC_s}+\tfrac{T_s-T_\infty}{R_sC_s}-\nu_s(t)\right)+\alpha_3(\overline T_s-T_s)\geqslant 0.
\end{align}

Since $L_\nu B_4(x)=-\nu_s(t)$, its time derivative contributes an additional term
\begin{align}
\tfrac{d}{dt}\big(L_\nu B_4(x)\big)=-\dot \nu_s(t).
\end{align}
Now, defining the nominal part 
\begin{align}
    H_s&(T_c,T_s, T_{\infty},\dot{Q}_c,u) =  A_4 T_c + A_5 T_s + A_6 T_{\infty} \nonumber\\
   &+ A_7 + A_8 T_c u + A_9 u^2 + A_{10} \dot Q_c, 
\end{align}
where the constants $A_m,\, \forall m\in\{4,\dots,10\}$ are defined in Table~\ref{coeff}.
So, the constraint is
\begin{align}
    H_s(T_c,T_s, T_{\infty},\dot{Q}_c,u)\! -\!\tfrac{\nu_c(t)}{R_cC_s}\!+&\!\left(\tfrac{1}{R_cC_s}\!+\!\tfrac{1}{R_sC_s}\right)\!\nu_s(t)\!-\tfrac{\nu_{\infty}(t)}{R_sC_s}-\dot \nu_s(t) \geqslant 0
\end{align}
We note that the uncertainty terms are bounded below by
\begin{align}
-\tfrac{\nu_c(t)}{R_cC_s} &\geqslant -\tfrac{\overline{\nu}_c}{R_cC_s}, \\
\!\left(\tfrac{1}{R_cC_s}\!+\!\tfrac{1}{R_sC_s}\!-\!\alpha_4\right)\!\nu_s(t)\!
&\geqslant\!
-\!
\!\left|\!
\tfrac{1}{R_cC_s}\!+\!\tfrac{1}{R_sC_s}\!-\!\alpha_4
\right|\overline{\nu}_s,\\
\!-\!\tfrac{\nu_{\infty}(t)}{R_sC_s} &\geqslant -\tfrac{\overline{\nu}_{\infty}}{R_sC_s}, \\
-\dot{\nu}_s(t) &\geqslant -\overline{ d\nu_s}.
\end{align}
Therefore, a worst-case robust constraint is
\begin{align}
    H_s(T_c,T_s, T_{\infty},\dot{Q}_c,u)\!-\!\tfrac{\overline{\nu}_c}{R_cC_s}
-&\left|\tfrac{1}{R_cC_s}+\tfrac{1}{R_sC_s}-\alpha_4
\right|\overline{\nu}_s-\tfrac{\overline{\nu}_{\infty}}{R_sC_s}-\overline{ d\nu_s}
\geqslant 0. \label{pps}
\end{align}
Let $\Delta_4=\tfrac{\overline{\nu}_c}{R_cC_s}+ \left|
\tfrac{1}{R_cC_s}+\tfrac{1}{R_sC_s}-\alpha_4 \right|\overline{\nu}_s+\tfrac{\overline{\nu}_{\infty}}{R_sC_s}+\overline{ d\nu_s}$ denote the worst-case disturbance margin.
So, \eqref{pps} becomes
\begin{align}
    H_s(T_c,T_s, T_{\infty},\dot{Q}_c,u)-\Delta_4\geqslant 0
\end{align}

\vspace{2mm}
\noindent
Finally, using Theorem~\ref{theory2}, we can assert that any optimal solution of $u$ \eqref{optimization} that satisfies the constraints \eqref{opt_constrain}-\eqref{SSC} for the barrier functions $B_i$ conditions, in turn, guarantees the forward invariance of the battery safety set \eqref{Safety set}. This completes our proof.

\end{document}